\newcommand\id{\leavevmode\hbox{\small1\kern-3.3pt\normalsize1}}
\newtheorem{theorem}{Theorem}
\newtheorem{corollary}[theorem]{Corollary}
\begin{document}

\title{Impossibility of  Cloning  of Quantum Coherence }

\author{Dhrumil Patel\textsuperscript{1}}
\email{dhrumil.patel@research.iiit.ac.in}
\affiliation{Center for Security, Theory and Algorithmic Research, International Institute of Information 
Technology-Hyderabad, Gachibowli, Telangana-500032, India.}

\author{Subhasree Patro\textsuperscript{1}}
\email{subhasree.patro@research.iiit.ac.in}
\affiliation{Center for Security, Theory and Algorithmic Research, International Institute of Information 
Technology-Hyderabad, Gachibowli, Telangana-500032, India.}

\author{Chiranjeevi Vanarasa}
\email{chiranjeevi.v@research.iiit.ac.in}
\affiliation{Center for Security, Theory and Algorithmic Research, International Institute of Information 
Technology-Hyderabad, Gachibowli, Telangana-500032, India.}

\author{Indranil Chakrabarty}
\email{indranil.chakrabarty@iiit.ac.in}
\affiliation{Center for Security, Theory and Algorithmic Research, International Institute of Information 
Technology-Hyderabad, Gachibowli, Telangana-500032, India. \\
Centre for Theoretical Physics, Jamia Milia Islamia University, Jamia Nagar, Okhla, New Delhi, Delhi 110025.}

\author{Arun Kumar Pati}
\email{akpati@hri.res.in}
\thanks{\\\textsuperscript{1}These two authors contributed equally to this manuscript}
\affiliation{Quantum Information and Computation group, Harish-Chandra  Research  Institute, HBNI, Allahabad, India}

\date{\today}
\begin{abstract}
It is well known that it is impossible to clone an arbitrary quantum state. However, this inability does not lead directly to no-cloning of quantum coherence. Here, we show that it is impossible to clone the coherence of an arbitrary quantum state which is a stronger statement than the 'no-cloning of quantum state'. In particular, with ancillary system as machine state, we show that it is impossible to clone the coherence of states whose coherence is greater than the coherence of the known states on which the transformations are defined. Also, we characterize the class of states for which coherence cloning will be possible for a given choice of machine. Furthermore, we find the maximum range of states whose coherence can be cloned perfectly. The impossibility proof also holds when we do not include machine states.
\end{abstract}

\maketitle

\section{\label{sec:Intro}Introduction}
The phenomenon of quantum superposition and entanglement lies at the heart of quantum mechanics  acting as  resources which we can harness to perform practical and important information theoretic tasks \cite{Nielsen}. Motivated by the increasing  importance of quantum entanglement \cite{Horo} in quantum information processing and communication schemes, a general
study of the theory of resources within the paradigm of quantum mechanics  and beyond is being formulated. We have several entanglement measures to quantify entanglement, however, until recently there was no standard way to quantify the coherence present in a quantum state. 
Quantum coherence can be viewed as a fundamental signature of non classicality in physical systems. Coherence can also be used as a resource for certain tasks like better cooling \cite{Bras, Lost} or work extraction processes in nano-scale thermodynamics,  in many quantum algorithms \cite{Anand, Li, Cas}, in quantifying wave-particle duality \cite{WaveParticleDuality1, WaveParticleDuality2, WaveParticleDuality3} and in  biological processes \cite{band, wilde}. The  resource theory of quantum coherence \cite{Strlet3, Asboth, Xi, Strlet4, Qi, Chin, Zhu, Zhu1, Korz} along with other resource theories of entanglement and thermodynamics \cite{Hill, Nap} has also been established.
Once we have the measure based on a given set of axioms to quantify the coherence \cite{Aberg,Baum, Wint, Stret1, Stret2, Chit, Theu, Mukho, Mitch} we can build the resource theory of coherence. This seeks to quantify and study the amount of linear superposition a quantum state possesses with respect to a given basis. Given a state $\rho$, with its matrix elements as $\rho_{ij}$, the amount of coherence present in the state in the basis $\{|i\rangle\}$
is given by the quantity $C_l(\rho) = \sum_{i\neq j} |\langle i| \rho | j \rangle|$ which is known as the $l_1$-norm of coherence. Note that coherence is a basis dependent quantity as the amount of coherence will be different  in  different  basis.   Since the $l_1$-norm  is  a  function
of the off-diagonal elements of the given density matrix
representation, clearly the value of coherence will be zero in  the eigen basis of the density matrix, where there are no off-diagonal elements. 


Quantum superposition and entanglement play a pivotal role in achieving information processing tasks that are otherwise not possible by any other classical resource. The same properties also forbid us to do certain tasks that are otherwise achievable classically. 
It started with the no-cloning theorem which states that there does not exist any quantum operation which can perfectly duplicate a pure state \cite{clone}. In particular, the no-cloning theorem states that if we have cloning machine which can copy two orthogonal quantum states then with the same cloning machine it is impossible to create an identical copy of an arbitrary quantum state. Pati and Braunstein later showed that we cannot delete either of the two quantum states perfectly \cite{AKP}. In addition to these two famous no-cloning and no-deletion theorem there are many other no-go theorems like no-flipping (impossibility to flip an arbitrary quantum state) no-self replication (cannot have a universal quantum constructor ) \cite{AKP1}. A two dimensional quantum system can always be represented as points on the Bloch Sphere parametrized by azimuthal angle $\theta$ and the phase angle $\phi$. It is interesting to note that there are no-go theorems like no-partial erasure \cite{AKP2}, no-splitting \cite{Zhou}  and no- partial swapping \cite{Indranil} which together tells us the indivisibility of the information content present in a quantum system. 

At this point it is interesting to ask the question {\em whether it is possible to clone coherence of arbitrary quantum states.} We know that cloning of arbitrary quantum state implies signaling. Therefore, the no-signaling implies the no-cloning but the no-cloning does not imply the no-signaling. Since, the cloning of quantum states implies cloning of coherence, it is obvious that no-cloning of coherence implies no-cloning of quantum states. Therefore, cloning of coherence of an arbitrary state is a more fundamental question. Thus, the no-cloning of quantum coherence is more powerful than no-cloning of an arbitrary quantum state. We have given few examples of state cloners in the Appendix \ref{sec:App_CloningExamples} to illustrate that cloning of quantum coherence is not the same as the cloning of quantum state. 

In quantum mechanics the wave function describes the physical system completely. Hence, cloning of quantum states would mean cloning of both wave and particle aspects of the entity. However, when we clone coherence, we try to clone only the wave aspect. In this paper, we show that indeed it is so and these two cloners are different. It is interesting to see that we cannot clone the coherence of arbitrary superposition of orthogonal states as long as the coherence of the state is more than the coherence of the orthogonal states in the given basis. This result holds when we define the cloning transformation with the machine states. However, we cannot say directly anything specific when the coherence of the input state is less than or equal to the coherence of these orthogonal states but nevertheless in this zone we are able to characterize the states  whose coherence can be cloned. We also find the maximum range of states whose coherence can be cloned perfectly.
Further, we show that there does not exist any universal unitary operator as a coherence cloner even when we are not considering the ancillary states. We find that the impossibility of universal coherence cloning fundamentally depends upon the choice of the known states and is very much  different from the cloning of the quantum state.   

\section{\label{sec:QCohCloning}No-Cloning of Quantum Coherence With Machine States}
In the case of the no-cloning theorem for quantum states we start with an assumption that we can clone two known  orthogonal quantum states. Here, we start with an assumption that we can copy the coherence of two known orthogonal quantum states $|\psi_{1}\rangle$ and $|\psi_{2}\rangle$ and then prove that it is impossible to clone the coherence of an unknown quantum state universally. At this point one may ask the question that how do we know that we can clone coherence of two orthogonal quantum states $|\psi_{1}\rangle$ and $|\psi_{2}\rangle$. We can argue that since $|\psi_{1}\rangle$ and $|\psi_{2}\rangle$ are two known orthogonal states, we can make copies of these quantum states. Now, cloning of quantum states always imply that the cloning of coherence is true (though the reverse is not true). This is because when we can clone the entire state we can definitely clone the coherence content of the state. Therefore, it is natural to assume that we can clone coherence of two known orthogonal states.

Let $U_{cc}$ be the unitary transformation that produces two copies of coherence starting from two orthogonal quantum states. The cloning transformation for coherence is given by,
\begin{equation}
    \label{eq:pure2mixed1}
    \begin{aligned}
        |\psi_{1}\rangle_A |0\rangle_B |X_{0}\rangle_C \longrightarrow |\Psi_{1}{}\rangle_{AB} |X_{1}\rangle_C,\\
        |\psi_{2}\rangle_A |0\rangle_B |X_{0}\rangle_C \longrightarrow |\Psi_{2}{}\rangle_{AB} |X_{2}\rangle_C,
    \end{aligned}
\end{equation}
where $|\psi_{1}\rangle$, $|\psi_{2}\rangle$ are input states, $|0\rangle$ is the blank state and $|X_{0}\rangle$ is the initial machine state. Also, $|\Psi_{1}\rangle$ and $|\Psi_{2}\rangle$ are states whose subsystems $A$ and $B$ have coherence same as that of the input states, and, $|X_{1}\rangle$ and $|X_{2}\rangle$ are the corresponding final machine states. The machine states satisfy $\langle X_{2}| X_{1}\rangle$ = 0 due to unitarity of the transformation.
Let us represent the two orthogonal states $|\psi_{1}\rangle_{A}$ and $|\psi_{2}\rangle_{A}$ in the $\{|0\rangle,|1\rangle\}$ basis as, $|\psi_{1}\rangle_{A} = a|0\rangle_{A} + b|1\rangle_{A}$ and $|\psi_{2}\rangle_{A} = b^{*}|0\rangle_{A} - a^{*}|1\rangle_{A}$.

As the transformation demands coherence to be perfectly copied, we must have $C_l(|\psi_{1}\rangle_{A}$) = $C_l(\rho_{A}{'})$ = $C_l(\rho_{B}{'})$ and $C_l(|\psi_{2}\rangle_{A})$ =  $C_l(\rho_{A}{''})$ = $C_l(\rho_{B}{''}) $ where,
\begin{equation}
    \begin{aligned}
        \rho_{A}{'} &= Tr_{B}|\Psi_{1}\rangle_{ABAB} \langle \Psi_{1}|,  \rho_{B}{'} = Tr_{A}|\Psi_{1}\rangle_{ABAB} \langle \Psi_{1}|, \\
        \rho_{A}{''} &= Tr_{B}|\Psi_{2}\rangle_{ABAB} \langle \Psi_{2}|,  \rho_{B}{''} = Tr_{A}|\Psi_{2}\rangle_{ABAB} \langle \Psi_{2}|.
    \end{aligned}
\end{equation}
Since the coherence of the orthogonal states are same, we have $C_l(|\psi_{1}\rangle_{A})=C_l(|\psi_{2}\rangle_{A})= 2|a||b|$, where $C_l(\rho)$ is the $l_1$-norm for quantifying quantum coherence. It may be noted that in the case of cloning of quantum states we require two identical copies of the input state at the output port. However, for cloning of coherence this is not the case as there can be two non identical state with the same coherence.
Since any state can be represented on the Bloch sphere as $ \rho = \frac{I + \vec{m}^{\,}.\vec{\sigma}^{\,}}{2}  $ with $ \vec{m}^{\,} = (m_{x}, m_{y}, m_{z})  $ as the Bloch vector and $\vec{\sigma}^{\,} = (\sigma_{x}, \sigma_{y}, \sigma_{z}) $ are the Pauli matrices. The coherence in $\{|0\rangle$, $|1\rangle\}$ basis is given by $C_{l}(\rho) = \sqrt{m_{x}^{2} + m_{y}^{2}}$. Hence, coherence only depends on $m_{x}$ and $m_{y}$ values. As shown in Fig. \ref{fig:TwoColors}, we can say that all the states that lie on the curved surface of the cylinder with radius  $\sqrt{m_{x}^{2} + m_{y}^{2}}$ will have the same coherence. 
\noindent At this point it is important to ask this question: \textit{Does quantum mechanics allow existence of a universal cloner for cloning the coherence of an arbitrary input state $\alpha|\psi_{1}\rangle + \beta|\psi_{2}\rangle$. } The answer to the question is \textit{No}.

\begin{theorem}
\label{th:noUniversalCloner}
It is impossible to clone the coherence of an arbitrary quantum state $|\psi\rangle$=$\alpha|\psi_{1}\rangle +\beta|\psi_{2}\rangle$, with the cloning transformations given by equation (\ref{eq:pure2mixed1}) when the coherence of the state $|\psi\rangle$ is more than the coherence of the states $|\psi_i\rangle$ $(i=1,2)$ for a fixed choice of basis $\{|0\rangle, |1\rangle\}$ .
\end{theorem}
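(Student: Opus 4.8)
The plan is to push the superposition through $U_{cc}$ by linearity, trace out the machine, and then use convexity of the $l_1$-norm of coherence to contradict the defining property of a coherence cloner. First I would apply $U_{cc}$ to $|\psi\rangle_A|0\rangle_B|X_0\rangle_C$ with $|\psi\rangle = \alpha|\psi_1\rangle + \beta|\psi_2\rangle$; linearity of the unitary together with the two relations of equation (\ref{eq:pure2mixed1}) forces the global output to be $|\Phi\rangle_{ABC} = \alpha\,|\Psi_1\rangle_{AB}|X_1\rangle_C + \beta\,|\Psi_2\rangle_{AB}|X_2\rangle_C$.

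Second, I would reduce to the cloning ports $A$ and $B$. Since unitarity of $U_{cc}$ gives $\langle X_2|X_1\rangle = 0$, tracing out $C$ annihilates the interference terms, leaving the classical mixture $\rho_{AB} = |\alpha|^2\,|\Psi_1\rangle\langle\Psi_1| + |\beta|^2\,|\Psi_2\rangle\langle\Psi_2|$. Tracing out $B$ then yields $\rho_A = |\alpha|^2\rho_A' + |\beta|^2\rho_A''$, and symmetrically $\rho_B = |\alpha|^2\rho_B' + |\beta|^2\rho_B''$, where $\rho_A', \rho_B', \rho_A'', \rho_B''$ are exactly the clone-port reduced states introduced above, each carrying coherence $2|a||b|$ by the cloning hypothesis.

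Third, I would invoke convexity of $C_l$: because $C_l(\sigma) = \sum_{i\neq j}|\sigma_{ij}|$, the triangle inequality gives $C_l\!\left(\sum_k p_k\sigma_k\right) \le \sum_k p_k\,C_l(\sigma_k)$. Applied to $\rho_A$, together with $|\alpha|^2 + |\beta|^2 = 1$ (the $|\psi_i\rangle$ being orthonormal), this gives $C_l(\rho_A) \le (|\alpha|^2 + |\beta|^2)\,2|a||b| = 2|a||b| = C_l(|\psi_1\rangle) = C_l(|\psi_2\rangle)$. But a working coherence cloner must satisfy $C_l(\rho_A) = C_l(|\psi\rangle)$, while the theorem's hypothesis is $C_l(|\psi\rangle) > C_l(|\psi_i\rangle)$; the chain $2|a||b| \ge C_l(\rho_A) = C_l(|\psi\rangle) > 2|a||b|$ is the desired contradiction. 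Equivalently, on the Bloch sphere: the transverse part $(m_x, m_y)$ of the cloned reduced state is the $|\alpha|^2, |\beta|^2$-weighted average of the transverse parts of $\rho_A'$ and $\rho_A''$, both of length $2|a||b|$, so its length cannot exceed $2|a||b|$.

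The step that needs care is the second one — verifying that $\langle X_2|X_1\rangle = 0$ really does eliminate every cross term so that $\rho_{AB}$ is an exact convex combination rather than merely an approximate one; after that the argument is just the triangle inequality and normalization bookkeeping. It is worth emphasizing that nothing is assumed about $|\Psi_1\rangle$ and $|\Psi_2\rangle$ beyond the coherence constraint — in particular they need not be product across $A|B$ nor bear any structural relation to the inputs — which is precisely why no-cloning of coherence is a strictly stronger statement than no-cloning of the state.
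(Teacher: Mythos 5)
Your proof is correct and follows essentially the same route as the paper's: apply $U_{cc}$ by linearity, trace out the machine and the second port to obtain $\rho_A^{final}=|\alpha|^2\rho_A'+|\beta|^2\rho_A''$, and invoke convexity of $C_l$ to bound the output coherence by $2|a||b|$, contradicting the hypothesis $C_l(|\psi\rangle)>2|a||b|$. Your explicit flagging of the role of $\langle X_2|X_1\rangle=0$ in killing the cross terms is a point the paper asserts in its setup rather than in the proof itself, but this is a matter of presentation, not a different argument.
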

\begin{proof}
Without the loss of generality, let us use the $l_{1}$-norm as a measure of quantum coherence and assume fixed basis as the computational basis. Any arbitrary state in $|\psi_{1}\rangle_{A}$, $|\psi_{2}\rangle_{A}$ basis can be written as $|\psi\rangle_{A} =   \alpha|\psi_{1}\rangle_{A} + \beta|\psi_{2}\rangle_{A}$. The
$l1$-norm of coherence of the state $|\psi\rangle_{A}$ in $\{|0\rangle, |1\rangle\}$ basis is $2|(\alpha a + \beta b^{*})(\alpha b - \beta a^{*})|$. After the application of cloning transformation $U_{cc}$,the arbitrary state along with the blank and machine states becomes ($\alpha|\Psi_{1}\rangle_{AB}|X_{1}\rangle_{C} + \beta|\Psi_{2}\rangle_{AB}|X_{2}\rangle_{C}$). Tracing out the subsystems B and C, we get $\rho_{A}^{final}=|\alpha|^{2}\rho_{A}{'} + |\beta|^{2}|\rho_{A}{''}$. From the convexity property of coherence measure we have, $ C_l(\rho_{A}^{final}) \leq$ $(|\alpha|^{2}C_l(\rho_{A}{'}) + |\beta|^{2}C_l(\rho_{A}{''}))$ = $2(|\alpha|^{2}|a||b| + |\beta|^{2}|a||b|)$ = $2|a||b|=C(|\psi_i\rangle)$. Therefore, the final coherence of the subsystem A is at most $2|a||b|$. Therefore, it is evident that all the input states $|\psi\rangle$ whose initial coherence $C_l(|\psi\rangle)$ is greater than $2|a||b|$, which is the coherence of the known orthogonal states, it is impossible to clone the coherence perfectly. 
\end{proof}

\noindent \textbf{Note 1:} Theorem 1 holds for all coherence measures and is not only restricted to the $l_1$-norm of coherence. The convexity of any coherence measure ensures that the final coherence $C(\rho_{A}^{final})$ is bounded above by $ C(|\psi_{i}\rangle_{A})$, where $i=1,2$.\\

\noindent 
This tells that if the coherence of an arbitrary input state is greater than the coherence of the orthogonal states then we cannot copy the coherence of the state into a blank state. Geometrically, if we consider the Bloch sphere as the state space, the orthogonal states   $|\psi_1\rangle$ and $|\psi_2\rangle$ represent two symmetric points on the surface of each hemisphere of the Bloch sphere. Taking the shortest distance of each of these points from the central axis as radius, these circles will represent all the states with same coherence value. We will have exactly two similar circles, one in each hemispheres representing orthogonal states. All the pure states with greater coherence value will be the points on the surface which are lying between these two circles.
This theorem geometrically tells us that we cannot copy the coherence of the intermediate surface points (see Fig. \ref{fig:TwoColors}). However, the theorem does not tell anything about the points on the surface which lies on the circles (except $|\psi_1\rangle$ and $|\psi_2\rangle$ ) and other points lying between those circles and poles. It may be possible to clone some of these states. The theorem only tells us that given a choice of known orthogonal states there does not exist any universal cloner which will clone all pure states on the surface of the Bloch sphere. However, the theorem is only true as long as the orthogonal states are not from the equatorial circle of the Bloch sphere ($|+\rangle$ and $|-\rangle$ lying on the equator of the Bloch sphere can be one such example). In that scenario, we do not have any input state with a coherence greater than the coherence of these equatorial orthogonal states (which is 1). If we view this, circles from each hemisphere coincides with each other and there is no intermediate point. The important question is whether for such choice of cloner it is possible to clone all the states on the surface of Bloch sphere. The answer to this is once again "no" and indeed there does not exist a universal cloner for whatever choice of machine.

\begin{figure}
\begin{tikzpicture}[line cap=round, line join=round, >=Triangle]
  \tikzset{
    partial ellipse/.style args={#1:#2:#3}{
        insert path={+ (#1:#3) arc (#1:#2:#3)}
    }
  }
  \clip(-3,-3) rectangle (3.2,3);
  \scriptsize
  
  \draw [draw = white, fill = orange] (0,0) circle (2cm);
  \begin{scope}[even odd rule]
  \clip (1.414,1.414)  -- (-1.414,1.414) arc(135:225:2cm) -- (1.414,-1.414) arc(315:360:2cm) -- (2,0) arc(0:45:2cm);
  \fill [fill = cyan] (-3,-3) rectangle (3,3);
  \end{scope}
  
  \draw [dotted, fill = cyan] [rotate around={0:(0,0)}, dash pattern=on 2pt off 2pt] (0,-1.35) ellipse (1.47cm and 0.15cm);
  \draw[dotted, fill = orange] [rotate around={0:(0,0)}, dash pattern=on 2pt off 2pt] (0,1.35) ellipse (1.47cm and 0.15cm);
  
  \draw [dotted, fill = yellow, fill opacity = 0.2] [rotate around={0:(0,0)}, dash pattern=on 2pt off 2pt] (0,-1.35) ellipse (1.47cm and 0.15cm);
  \draw [draw = white, fill = yellow, opacity = 0, fill opacity =0.2] (1.47,1.35) rectangle (-1.47,-1.35);
  \draw[dotted, fill = yellow,  fill opacity = 0.2] [rotate around={0:(0,0)}, dash pattern=on 2pt off 2pt] (0,1.35) ellipse (1.47cm and 0.15cm);
  
  \draw[thick,black] (0,-1.35) [partial ellipse=180:360:1.47cm and 0.15cm];
  \draw[thick,black] (0,1.35) [partial ellipse=180:360:1.47cm and 0.15cm];
  \draw (0,0) circle (2cm);
  \draw (1.414,1.414) node[anchor=south west] {$\mathbf{ |\psi_1\rangle}$};

  \draw [rotate around={0.:(0.,0.)},dash pattern=on 0.5pt off 1pt] (0,0) ellipse (2cm and 0.15cm);
  \draw [dotted] (0,-2) node[anchor=north] {$\mathbf {|1\rangle}$} -- (0,2) node[anchor=south] {$\mathbf {|0\rangle}$};
  \draw [dotted] (1.47,1.35) -- (1.47,-1.35);
  \draw [dotted] (-1.47,1.35) -- (-1.47,-1.35);
  \draw [->] (2.05,0.8) node[anchor=west] {$\mathbf {CYL_{|\psi_1\rangle}}$} -- (1.47, 0.8) ;
  
  \draw [fill] (0,0) circle (0.5pt);

\end{tikzpicture}
\caption{Classification of states for a given cloner. The blue zone on the surface of the Bloch sphere denotes the pure states whose coherence can not be cloned. The remaining zone is showed by orange colour. $CYL_{|\psi_1\rangle}$ represents all the states (pure as well as mixed) which have same coherence as $|\psi_1\rangle$}
\label{fig:TwoColors}
\end{figure}



\begin{corollary}
For a cloning transformation given by equation (\ref{eq:pure2mixed1}), with the choice of known orthogonal states $|\psi_{1(E)}\rangle$ and $|\psi_{2(E)}\rangle$, taken from the equator, it is impossible to clone the coherence of an arbitrary quantum state $|\psi\rangle$=$\alpha|\psi_{1(E)}\rangle +\beta|\psi_{2(E)}\rangle$, for a fixed choice of basis $\{|0\rangle, |1\rangle\}$ .
\end{corollary}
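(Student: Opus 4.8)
The plan is to bypass the convexity argument of Theorem~\ref{th:noUniversalCloner}, which becomes vacuous when the known states are equatorial, and to replace it by an argument that exploits the orthogonality of the final machine states. Write the general orthogonal equatorial pair as $|\psi_{1(E)}\rangle=\tfrac{1}{\sqrt2}(|0\rangle+e^{i\gamma}|1\rangle)$ and $|\psi_{2(E)}\rangle=\tfrac{1}{\sqrt2}(|0\rangle-e^{i\gamma}|1\rangle)$, so that $C_l(|\psi_{1(E)}\rangle)=C_l(|\psi_{2(E)}\rangle)=1$. Since $1$ is already the largest value the $l_1$-norm of coherence can take on a qubit, the bound $C_l(\rho_A^{final})\le|\alpha|^2 C_l(\rho_A')+|\beta|^2 C_l(\rho_A'')=1$ is automatically true and rules out nothing; so instead I would compute the reduced output state on $A$ explicitly and play it off against the input coherence for a well-chosen pair of inputs.

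First I would apply the transformation~(\ref{eq:pure2mixed1}) to the general input $|\psi\rangle_A=\alpha|\psi_{1(E)}\rangle_A+\beta|\psi_{2(E)}\rangle_A$ with $|\alpha|^2+|\beta|^2=1$. By linearity the joint output is $\alpha|\Psi_1\rangle_{AB}|X_1\rangle_C+\beta|\Psi_2\rangle_{AB}|X_2\rangle_C$; tracing out $B$ and $C$ and using $\langle X_1|X_2\rangle=0$ annihilates the interference terms and leaves $\rho_A^{final}=|\alpha|^2\rho_A'+|\beta|^2\rho_A''$. The key observation is that $\rho_A^{final}$ depends on the amplitudes only through $|\alpha|^2$ — it is completely blind to the relative phase of $\alpha$ and $\beta$ — whereas the coherence of the input, $C_l(|\psi\rangle)=|\alpha+\beta|\,|\alpha-\beta|$ (obtained by expanding $|\psi\rangle$ in the $\{|0\rangle,|1\rangle\}$ basis), is not.

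I would then finish by exhibiting two admissible inputs with identical populations but different coherence. For $\alpha=\beta=\tfrac{1}{\sqrt2}$ one gets $C_l(|\psi\rangle)=0$, while for $\alpha=\tfrac{1}{\sqrt2}$, $\beta=\tfrac{i}{\sqrt2}$ one gets $C_l(|\psi\rangle)=1$; both have $|\alpha|^2=|\beta|^2=\tfrac{1}{2}$. A universal coherence cloner would have to map the first input to a reduced state on $A$ with coherence $0$ and the second to one with coherence $1$; but by the previous paragraph both inputs produce the \emph{same} state $\tfrac{1}{2}(\rho_A'+\rho_A'')$, whose coherence is a single fixed number. This contradiction establishes the corollary, and since the machine never entered the argument it holds for every choice of machine. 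The step I expect to take the most care is the conceptual one — recognising that the convexity bound is empty for equatorial known states and isolating its correct replacement, namely the phase-blindness of the output versus the phase-sensitivity of the input coherence; once that is in hand, the vanishing of the cross terms and the two sample computations are routine. (The same ``fix $|\alpha|^2$, vary the relative phase'' idea also yields a convexity-free reproof of Theorem~\ref{th:noUniversalCloner} itself.)
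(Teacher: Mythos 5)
Your proof is correct and follows essentially the same route as the paper's: both arguments hinge on the fact that $\rho_A^{final}=|\alpha|^2\rho_A'+|\beta|^2\rho_A''$ is blind to the relative phase of $\alpha$ and $\beta$ while the input coherence $|\alpha+\beta|\,|\alpha-\beta|$ is not, and your two witness states ($\alpha=\beta=\tfrac{1}{\sqrt2}$ versus $\alpha=\tfrac{1}{\sqrt2},\ \beta=\tfrac{i}{\sqrt2}$) are exactly the pair used in the paper's worked example following the corollary. Your version is marginally cleaner in that it notes the two inputs yield the \emph{identical} output state (not merely equal output coherences) and makes explicit why the convexity bound of Theorem~\ref{th:noUniversalCloner} is vacuous here, but the underlying idea is the same.
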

\begin{proof}
As per the assumption that the unitary cloning transformation ($U_{cc}$) perfectly clones the coherence of $|\psi_{1(E)}\rangle_{A}$, $|\psi_{2(E)}\rangle_{A}$ into subsystem of $|\Psi_{1}\rangle_{AB}$, $|\Psi_{2}\rangle_{AB}$ we have $2|x|$ = 1 and $2|y|$ = 1, where, $ x = \rho_{A 01}^{'}$ and $y = \rho_{A 01}^{''}$ are the off-diagonal terms of the subsystem $\rho_{A}{'}$ and $\rho_{A}{''}$ respectively. Here we will only look at system A and prove that there exists some states for which cloning is not possible. With the constraint $2|a||b|=1$ and the normalization condition we have $|a|$ = $|b|$ = $\frac{1}{\sqrt{2}}$. Now, the initial coherence of the input state becomes $C_l(|\psi\rangle_{A})=2|\alpha^2 ab - \beta^2 a^{*}b^{*}|=2\sqrt{\frac{1}{4}-2Re(\alpha^2\beta^{*2}(ab)^2)}$. However, the final coherence of the subsystem $A$ is given by,
$C_l(\rho^{final}_{A})=2|(|\alpha|^{2}x + |\beta|^{2}y)|$.

Let us assume that there is a universal machine that clones coherence of any arbitrary state $|\psi\rangle$. Then $C_l(|\psi\rangle)$ should be equal to $C_l(\rho^{final}_{A})$ for every $\alpha$ and $\beta$ values. But we can see that there exist $\alpha$ and $\beta$ values such that the initial and final coherence values are not equal.

There exist $\alpha_{1}$ and $\beta_{1}$ such that $|\alpha|=|\alpha_{1}|$ and $|\beta|=|\beta_{1}|$ but $\alpha\neq\alpha_{1}$ or/and $\beta\neq\beta_{1}$. In that case though the final coherences will be equal, the initial coherences are not. That clearly means perfect cloning of coherence is not possible for at least one of those two states.
\end{proof}

\noindent\textbf{Example :}
Let, $|\chi_1\rangle = \alpha_1|\psi_1\rangle + \beta_1|\psi_2\rangle$ and $|\chi_2\rangle = \alpha_2|\psi_1\rangle + \beta_2|\psi_2\rangle$
where, $\alpha_1=\frac{1}{\sqrt{2}}$, $\beta_1=\frac{1}{\sqrt{2}}$, $\alpha_{2}=\frac{\imath}{\sqrt{2}}$, $\beta_{2}=\frac{1}{\sqrt{2}}$. Here $|\psi_1\rangle$ and $|\psi_2\rangle$ are the states defined in the Eq. \ref{eq:pure2mixed1}. Then initial coherence of $|\chi_1\rangle$ is given by $C_l(|\chi_1\rangle) = |ab - a^{*}b^{*}|$ and that of $|\chi_2\rangle$ is $C_l(|\chi_2\rangle) = |ab + a^{*}b^{*}|$ given that |a| = |b| = $\frac{1}{\sqrt{2}}$. Final coherence, $C_l(\rho_1^{final})$ = $C_l(\rho_2^{final})$ = |$\rho_{A01}^{'}$ + $\rho_{A01}^{''}$|, where $\rho_{A01}^{'}$ and $\rho_{A01}^{''}$ are the off diagonal terms of the susbsystem $\rho_A^{'}$ and $\rho_A^{''}$, respectively.
\newline
Clearly we see that there is a mismatch of the initial coherence of states $|\chi_1\rangle$ and $|\chi_2\rangle$, but their final coherence is same. Therefore, at least for one of the states the coherence is not getting perfectly copied. 

\section{\label{sec:SCloning}No Cloning of Quantum Coherence Without Machine States}

In the previous section, we have shown that there does not exist universal cloning transformation which will be able to clone the coherence of any arbitrary state. In the previous proof the cloning transformation includes the ancilla states representing the machine states. In  this section, we investigate whether there exists any unitary in general which will act on the input state and blank state without invoking ancillary state that will clone coherence for any arbitrary state. We find that there exists no such unitary. Like in the previous section, here also we assume that the perfect cloning is possible for two known orthogonal states $|\psi_{1}\rangle$ and $|\psi_{2}\rangle$. The transformation is given by

\begin{equation}
    \label{eq:pure2mixedStronger}
    \begin{aligned}
        |\psi_{1}\rangle_A |0\rangle_B  \longrightarrow |\Psi_{1}\rangle_{AB}, \\
        |\psi_{2}\rangle_A |0\rangle_B \longrightarrow |\Psi_{2}\rangle_{AB},
    \end{aligned}
\end{equation}
where, $\langle\psi_{1}|\psi_{2}\rangle$ = 0. Therefore, $\langle\Psi_{1}|\Psi_{2}\rangle$ = 0.
\begin{theorem}
\label{th:noStrongUniversalCloner}
It is impossible to clone the coherence of any arbitrary quantum state $|\psi\rangle$=$\alpha|\psi_{1}\rangle +\beta|\psi_{2}\rangle$, with the cloning transformations given by Eq. \ref{eq:pure2mixedStronger}.
\end{theorem}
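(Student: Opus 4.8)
The plan is to keep the skeleton of the proof of Theorem~\ref{th:noUniversalCloner} and then confront the one genuinely new feature of the ancilla-free setting: with no machine register to carry away the orthogonality, the interference terms in the cloned state no longer drop out. First I would invoke linearity of the unitary cloning map on the superposition: applying the map of Eq.~\eqref{eq:pure2mixedStronger} to $|\psi\rangle_A|0\rangle_B$ with $|\psi\rangle_A=\alpha|\psi_1\rangle_A+\beta|\psi_2\rangle_A$ gives the pure two-qubit output $|\Phi\rangle_{AB}=\alpha|\Psi_1\rangle_{AB}+\beta|\Psi_2\rangle_{AB}$, normalised since $\langle\Psi_1|\Psi_2\rangle=0$. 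Tracing out $B$ then gives
\begin{equation*}
\rho_A^{final}=|\alpha|^{2}\rho_A'+|\beta|^{2}\rho_A''+\alpha\beta^{*}\tau+\alpha^{*}\beta\,\tau^{\dagger},
\end{equation*}
with $\tau:=\tr_B(|\Psi_1\rangle_{AB}\langle\Psi_2|)$, and an analogous identity for $\rho_B^{final}$ with $\tau$ replaced by $\tr_A(|\Psi_1\rangle\langle\Psi_2|)$. Perfect coherence cloning forces $C_l(\rho_A^{final})=C_l(\rho_B^{final})=C_l(|\psi\rangle_A)$ for \emph{every} $\alpha,\beta$, where for a single qubit $C_l(\sigma)=2|\sigma_{01}|=\sqrt{m_x^{2}+m_y^{2}}$ as recorded above.

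The main obstacle is precisely that $\rho_A^{final}$ is no longer a convex combination of $\rho_A'$ and $\rho_A''$ — the cross term $\alpha\beta^{*}\tau+\alpha^{*}\beta\tau^{\dagger}$ is present — so the convexity bound used in Theorem~\ref{th:noUniversalCloner} is unavailable and must be replaced. I would fix the moduli $|\alpha|,|\beta|$ and let the relative phase $\gamma=\arg(\alpha\beta^{*})$ run. On one side, $C_l(|\psi\rangle_A)^{2}=4\,|(\alpha a+\beta b^{*})(\alpha b-\beta a^{*})|^{2}$ expands to a degree-two trigonometric polynomial in $\gamma$ whose second-harmonic amplitude is proportional to $|\alpha|^{2}|\beta|^{2}|a|^{2}|b|^{2}$. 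On the other side, writing $x=\rho'_{A01}$, $y=\rho''_{A01}$ (so $|x|=|y|=|a||b|$ by the two assumed clonings, although their phases stay free), the off-diagonal entry of $\rho_A^{final}$ is $|\alpha|^{2}x+|\beta|^{2}y+|\alpha||\beta|(e^{i\gamma}\tau_{01}+e^{-i\gamma}\overline{\tau_{10}})$, so $C_l(\rho_A^{final})^{2}$ is again a degree-two trigonometric polynomial in $\gamma$, now with second-harmonic amplitude proportional to $|\alpha|^{2}|\beta|^{2}|\tau_{01}||\tau_{10}|$ and a fixed phase offset. Equating the two for all $\gamma$ and all moduli yields a rigid system relating $a,b,x,y$ and $\tau$; the crux is to show it is inconsistent once one also imposes (i) $\langle\Psi_1|\Psi_2\rangle=0$, which bounds $\tau$ and correlates it with its $B$-side counterpart, (ii) positivity of $\rho_A^{final}$ and $\rho_B^{final}$, and (iii) that the identical matching must hold on subsystem $B$. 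I expect this reconciliation — precluding any way for the interference term $\tau$ to conspire with the otherwise phase-free $x,y$ so as to reproduce the full $\gamma$-dependence of $C_l(|\psi\rangle_A)$ — to be essentially all of the work; everything else is just linearity plus the single-qubit coherence formula.

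To anchor the argument and to settle the borderline choices of known pair I would first dispatch the polar extreme $|\psi_1\rangle=|0\rangle$, $|\psi_2\rangle=|1\rangle$, where $x=y=0$ and the matching reduces to $|e^{i\gamma}\tau_{01}+e^{-i\gamma}\overline{\tau_{10}}|=1$ for all $\gamma$; this expression traces an origin-centred ellipse, so constancy forces it to be a circle, namely $\tau_{01}=0$ or $\tau_{10}=0$ with the survivor of modulus one, which — since partial trace is trace-norm contractive, $\|\,|\Psi_1\rangle\langle\Psi_2|\,\|_1=1$ and hence $\|\tau\|_1\le 1$ — pins $|\Psi_1\rangle,|\Psi_2\rangle$ to product states with a common $B$-factor and then makes $\rho_B^{final}$ carry coherence $0\neq 2|\alpha||\beta|$, a contradiction. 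The equatorial extreme is the endpoint already flagged in Corollary~1, where a phase comparison of this kind produces two equal-modulus inputs of unequal coherence, and the generic intermediate latitudes are covered by the argument of the previous paragraph; running the phase comparison for an arbitrary admissible pair $|\psi_1\rangle,|\psi_2\rangle$ and verifying these two endpoints then completes the proof.
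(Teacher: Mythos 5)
There is a genuine gap. The heart of your argument for generic latitudes --- that the system of equations obtained by matching the $\gamma$-dependence of $C_l(|\psi\rangle_A)^2$ against that of $C_l(\rho_A^{final})^2$, together with $\langle\Psi_1|\Psi_2\rangle=0$, positivity, and the analogous $B$-side matching, is inconsistent --- is never actually carried out; you explicitly defer it (``I expect this reconciliation \dots to be essentially all of the work''), and then later cite ``the argument of the previous paragraph'' as covering the intermediate latitudes. But that paragraph only sets up the system; it does not show the second-harmonic amplitudes, first-harmonic terms and constants cannot be reconciled by some choice of the free phases of $x$, $y$ and the entries of $\tau$. Since this is precisely the step where the ancilla-free setting differs from Theorem~\ref{th:noUniversalCloner} (the interference term $\tau$ is the whole difficulty, as you correctly identify), the proof is incomplete at its central point. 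Your treatment of the polar case $|\psi_1\rangle=|0\rangle$, $|\psi_2\rangle=|1\rangle$ is essentially sound (the ellipse/constancy argument plus trace-norm contractivity does force a product structure with a common $B$-factor), but that special case does not stand in for the generic one.

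You also miss the reduction that makes the paper's proof short. The paper does not fight the interference terms at all: it observes that a universal ancilla-free cloner would in particular have to clone the coherence of $|+\rangle$ and $|-\rangle$, and since no mixed qubit state has coherence $1$, the outputs for these two inputs must be \emph{pure} and hence product states $|\psi_1'\rangle_A|\psi_1''\rangle_B$, $|\psi_2'\rangle_A|\psi_2''\rangle_B$ with all four factors maximally coherent. Unitarity forces one of the pairs $(\psi_1',\psi_2')$, $(\psi_1'',\psi_2'')$ to be orthogonal; demanding that every equatorial superposition $\gamma|+\rangle+\delta|-\rangle$ also map to a separable pair of maximally coherent states forces the other pair to coincide. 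The $B$ output is then a fixed maximally coherent pure state independent of the input, which immediately contradicts cloning for any input of coherence strictly less than $1$. In short, the rigidity of ``coherence $=1$ implies purity'' does all the work that your trigonometric-polynomial matching is meant to do, and it does it in two lines. If you want to salvage your route, you must either complete the inconsistency argument for general $|\psi_1\rangle,|\psi_2\rangle$ or, more economically, note that universality lets you test the would-be cloner on the maximally coherent pair and run the paper's purity argument there.
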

\begin{proof}
Let us assume that there exists a unitary that clones coherence of any arbitrary quantum state. Then this unitary should clone coherence for the states $|+\rangle$ and $|-\rangle$ as well. As these states are maximally coherent states, and this machine can clone the coherence perfectly, then the output states should also be maximally coherent states. The transformation would be given by Eq. \ref{eq:plusMinusTransformation}. The output states in this case are all pure because there are no mixed states whose coherence can be 1.
\begin{equation}
    \label{eq:plusMinusTransformation}
    \begin{aligned}
        |+\rangle_A |0\rangle_B  \longrightarrow |\psi_{1}{'}\rangle_{A}|\psi_{1}{''}\rangle_{B} \\
        |-\rangle_A |0\rangle_B \longrightarrow |\psi_{2}{'}\rangle_{A}|\psi_{2}{''}\rangle_{B}
    \end{aligned}
\end{equation}
 where, either $\langle\psi_{1}{'}|\psi_{2}{'}\rangle$ = 0 or $\langle\psi_{1}{''}|\psi_{2}{''}\rangle$ = 0. Let $|\phi\rangle = \gamma|+\rangle + \delta|-\rangle$ be an arbitrary quantum state on the equatorial circle of the Bloch sphere. The transformation as given in Eq. \ref{eq:plusMinusTransformation} results in the state of the system $AB$ to $|\Phi_{final}\rangle = \gamma|\psi_{1}{'}\rangle_{A}|\psi_{1}{''}\rangle_{B} + \delta|\psi_{2}{'}\rangle_{A}|\psi_{2}{''}\rangle_{B}$. For the coherence to be cloned perfectly, $|\Phi_{final}\rangle$ needs to be a separable system of two maximally coherent states. This makes either $|\psi_{1}{'}\rangle_{A} = |\psi_{2}{'}\rangle_{A}$ or $|\psi_{1}{''}\rangle_{A} = |\psi_{2}{''}\rangle_{A}$ because one of this pair has to be orthogonal.
\newline
Without loss of generality lets assume that $\langle\psi_{1}{'}|\psi_{2}{'}\rangle$ = 0 and $|\psi_{1}{''}\rangle_{A} = |\psi_{2}{''}\rangle_{A}$ then any state $|\psi_1\rangle$ with $C_l(|\psi_1\rangle) < 1$ from Eq. \ref{eq:pure2mixedStronger} can be written as $\alpha|+\rangle + \beta|-\rangle$, but under this transformation rules the system will transform to $(\alpha|\psi_{1}{'}\rangle + \beta|\psi_{2}{'})_A|\psi_{1}{''}\rangle_B$. Though the coherence of subsystem A is preserved the coherence of subsystem B is still 1. 
\end{proof}

\textit{                       }

\section{\label{sec:CONCL}CONCLUSION}

To summarize, we have shown that no-cloning for quantum states does not lead to the no-cloning of coherence. In fact, we prove much stronger statement, that is, the no-cloning of quantum coherence implies the no-cloning of quantum state. Since coherence captures the wave aspect of quantum particles our result shows that quantum wave cannot be amplified, where as classical wave can be amplified. This brings out a fundamental difference between classical and quantum wave. In particular, we are able to show that, for the input state whose coherence is greater than the coherence of the known states, coherence cloning is not possible. Besides this, we characterize the class of states for which coherence cloning
will be possible for a given choice of machine built on
known orthogonal states and find the maximum range of states whose coherence can be cloned perfectly. Interestingly, we also show that the universal cloner does not exist even in the situation where we have no ancillary inputs.\\ 

\textit{Acknowledgement:} Authors acknowledge  Dr. S. Mitra and Dr. K. Srinathan for having various fruitful discussions at different stage in the process of completing this research article.

\begin{appendix}
\section{\label{sec:App_CloningExamples}How cloning of coherence is different from cloning of states}
Here, we illustrate some instances where coherence cloning is not the same as the cloning of quantum states. As an example, consider the Wootter-Zurich (WZ) cloning machine \cite{clone} that performs the following operation
\begin{equation}
\label{eq:WootterZurich}
\begin{split}
|0\rangle_A |0\rangle_B |X_0\rangle_C \longrightarrow |0\rangle_A |0\rangle_B |X_1\rangle_C,\\
|1\rangle_A |0\rangle_B |X_0\rangle_C \longrightarrow |1\rangle_A |1\rangle_B |X_2\rangle_C.
\end{split}
\end{equation}
Here A, B and C are the input, output and machine qubits respectively. When we apply the same W-Z cloning machine on an arbitrary quantum state $\alpha|0\rangle + \beta|1\rangle$ ($|\alpha|^{2}+ |\beta|^{2} = 1 $) whose coherence is $2|\alpha||\beta|$ in $\{|0\rangle,|1\rangle\}$ basis, it transforms to the state $\alpha^2|0\rangle\langle0|+\beta^2|1\rangle\langle 1|$ which has zero coherence in $\{|0\rangle,|1\rangle\}$ basis. This shows that even if the state gets cloned approximately with W-Z cloning machine, there is no cloning of the coherence as W-Z cloning machine does not take into account the off-diagonal terms of the state. 

Let us consider another example of the Buzek Hillery (BH) cloning machine \cite{BuzekHilleryCloner} which is proved to be optimal state independent quantum cloning machine \cite{GisinOptimalCloningFidelity}. The 2-dimensional BH cloning transformation is given as
\begin{equation}
\begin{split}
  \left|\Psi_1\right\rangle_{A} \left|0\right\rangle_{B} \left|X_0\right\rangle_C &\rightarrow  c\left|\Psi_1\right\rangle_{A} \left|\Psi_1\right\rangle_{B} \left|X_{11}\right\rangle_C \nonumber\\
&+ d \left(\left|\Psi_1\right\rangle_{A} \left|\Psi_2\right\rangle_{B} +\left|\Psi_2\right\rangle_{A}\left|\Psi_1\right\rangle_{B}\right) \left|Y_{12}\right\rangle_C,
\\
\left|\Psi_2\right\rangle_{A} \left|0\right\rangle_{B} \left|X_0\right\rangle_C &\rightarrow  c\left|\Psi_2\right\rangle_{A} \left|\Psi_2\right\rangle_{B} \left|X_{22}\right\rangle_C \nonumber\\
&+ d \left(\left|\Psi_2\right\rangle_{A} \left|\Psi_1\right\rangle_{B} +\left|\Psi_1\right\rangle_{A}\left|\Psi_2\right\rangle_{B}\right) \left|Y_{21}\right\rangle_C,
\label{eq:B-H_gen_transform}  
\end{split}
\end{equation}
where the coefficients $c$ and $d$ are real. The notation A, B and C represents the input, output and machine qubits respectively. In case of cloning a single qubit, using the no-signaling constraint and the fidelity as parameter of quantum cloning machine, Gisin proved that B-H state independent quantum cloner is the optimal one with the fidelity $\frac{5}{6}$ \cite{GisinOptimalCloningFidelity}. But if we consider the ratio of the final coherence to the initial coherence ($l_1-$ norm), the B-H cloner gives $\frac{2}{3}$. That is, in other words, two thirds of coherence is getting copied with the B-H cloner. This example shows that even though information gets cloned upto $\frac{5}{6}$, coherence gets cloned only upto $\frac{2}{3}$. Possibly, this suggests us that when we clone quantum information we try to clone both the wave information and the particle information. As BH machine only clones wave information upto $\frac{2}{3}$ it may be the case that the higher value of $\frac{5}{6}$ is due to particle nature getting cloned more compared to wave information.

\begin{figure}
\begin{tikzpicture}[line cap=round, line join=round, >=Triangle]
 \clip(-3,-3) rectangle (3.2,3);
  
 \draw(0,0) circle (2cm);
  
  \draw [rotate around={0.:(0.,0.)},dash pattern=on 0.5pt off 1pt] (0,0) ellipse (2cm and 0.15cm);
  \tikzset{
    partial ellipse/.style args={#1:#2:#3}{
        insert path={+ (#1:#3) arc (#1:#2:#3)}
    }
  }
  \scriptsize
  
  \draw[blue] [rotate around={0:(0,0)}] (0,-1.35) ellipse (1.47cm and 0.15cm);
  \draw[blue] (1.462,1.364) node[anchor=south west] { $\mathbf{|\psi_1\rangle}$} -- (1.462,-1.364);
  \draw[blue] (-1.462,1.364) -- (-1.462,-1.364);
  \draw[blue] [rotate around={0:(0,0)}] (0,1.35) ellipse (1.47cm and 0.15cm);
  
  \draw[brown] [rotate around={0:(0,0)}] (0,-1.75) ellipse (0.95cm and 0.1cm);
  \draw[brown] (0.969,1.749) node[anchor=south west] { $\mathbf{|\psi\rangle}$} -- (0.969,-1.749);
  \draw[brown] (-0.969,1.749) -- (-0.969,-1.749);
  \draw[brown] [rotate around={0:(0,0)}] (0,1.75) ellipse (0.95cm and 0.1cm);
  
  
  \draw [dotted] (0,-2) -- (0,2);
  \draw (0,2) node[anchor=south] {$\mathbf {|0\rangle}$};
  \draw (0,-2) node[anchor=north] {$\mathbf {|1\rangle}$};
  \draw (-1.462,0.8) node[anchor=east] {$\mathbf {\rho^{'}_A}$} -- (1.462,-0.4) node[anchor=west] {$\mathbf {\rho^{''}_A}$};
  \draw (-1.462,-0.8) node[anchor=south east] {$\mathbf {\rho^{'}_B}$} -- (1.462,0.4) node[anchor=west] {$\mathbf {\rho^{''}_B}$};
  \draw [->] (2.05,0.8) node[anchor=west] {$\mathbf{CYL_{|\psi_1\rangle}}$} -- (1.462, 0.8) ;
  \draw [->] (2.05,-0.8) node[anchor=west] {$\mathbf{CYL_{|\psi\rangle}}$} -- (0.969, -0.8) ;
  \draw [fill] (-0.969,0.6) node[anchor=south west] {$\mathbf {\rho^{f1}_A}$} circle (0.75pt);
  \draw [fill] (-0.969,-0.6) node[anchor=north west] {$\mathbf {\rho^{f1}_B}$} circle (0.75pt);
  \draw [fill] (0.969,0.2) node[anchor=south east] {$\mathbf {\rho^{f2}_B}$} circle (0.75pt);
  \draw [fill] (0.969,-0.2) node[anchor=north east] {$\mathbf {\rho^{f2}_A}$} circle (0.75pt);
  \draw [fill] (0,0) circle (0.5pt);
  
\end{tikzpicture}
\caption{Solutions with convex combinations: The blue cylinder represents the states having the same coherence as the known orthogonal states. The orange cylinder represents the states having the same coherence as the input states. Here $\rho_{A(B)}^{f1(f2)}$ are nothing but $\rho_{A(B)}^{final1(final2)}$.}
\label{fig:TwoCylinders}
\end{figure}

\section{\label{sec:CHARAC}Classification of states given a coherence cloner}

In this subsection, we try to characterize the states whose coherence can be perfectly cloned given a machine defined over $|\psi_1\rangle$ and $|\psi_2\rangle$. Geometrically, we attempt to find out points on the surface of the sphere for which the cloning of coherence is possible. The entire Bloch sphere can be divided in two zones namely $C_{l}(|\psi\rangle_{A}) \leq 2|a||b|$  and $C_{l}(|\psi\rangle_{A}) > 2|a||b|$. In theorem \ref{th:noUniversalCloner}, we have already shown that cloning of coherence is not possible when $C_{l}(|\psi\rangle_{A}) > 2|a||b|$, however, it is not clear when $C_{l}(|\psi\rangle_{A}) \leq 2|a||b|$.

Let us take an arbitrary state $|\psi\rangle$ from the orange zone as shown in the Fig. \ref{fig:TwoColors}. All the states both pure and mixed that have same coherence value as $|\psi\rangle$ lie on $CYL_{|\psi\rangle}$ as shown in the Fig. \ref{fig:TwoCylinders}. For the coherence of $|\psi\rangle$ to be perfectly cloned, the output states $\rho_{A}^{final}$ and $\rho_{B}^{final}$ should lie on $CYL_{|\psi\rangle}$. As we have seen earlier $\rho_{A}^{final}=|\alpha|^{2}\rho_{A}{'} + |\beta|^{2}|\rho_{A}{''}$ and $\rho_{B}^{final}=|\alpha|^{2}\rho_{B}{'} + |\beta|^{2}|\rho_{B}{''}$ are convex combination of $\rho_{A}{'}$, $\rho_{A}{''}$ and $\rho_{B}{'}$, $\rho_{B}{''}$ respectively. Here, $\rho_{A}{'}$, $\rho_{A}{''}$ and $\rho_{B}{'}$, $\rho_{B}{''}$ are the mixed output states for the known orthogonal states $|\psi_i\rangle$ and should lie on the wider cylinder $CYL_{|\psi_1\rangle}$. This would mean that $\rho_{A}^{final}$ is an intersection of the line segment joining $\rho_{A}{'}$, $\rho_{A}{''}$ and $CYL_{|\psi\rangle}$, similar condition must hold for $\rho_{B}^{final}$.

For perfect cloning to happen the line segment joining $\rho_A^{'}$ and $\rho_A^{''}$ and the line segment joining $\rho_B^{'}$ and $\rho_B^{''}$ should intersect $CYL_{|\psi\rangle}$ in equal proportions, as it is evident from the expressions of $\rho_{A}^{final}$ and $\rho_{B}^{final}$. Let us imagine that $\rho_{A(B)}^{final1(final2)}$ are four intersection points. Fig. \ref{fig:TopView} shows some of the possible orientations of these four points.

Without loss of generality, let us just look at the subsystem A. Let $|\alpha| = k$. To find all the pure states that have same coherence as $|\psi\rangle$, whose coherence can be perfectly cloned, we only need to see what are the points of intersection of rims of $CYL_{|\psi\rangle}$ and the circle $CIRC_k$, where $CIRC_k$ contains all the points $\alpha|\psi_1\rangle + \beta|\psi_2\rangle$ whose $|\alpha| = k$, as shown in Fig. \ref{fig:IntersectingCYLcirc}. Depending on the $CYL_{|\psi\rangle}$, $CYL_{|\psi_1\rangle}$ and the values of k, the number of points of intersections will vary from 0 to 4, as shown in the Fig. \ref{fig:IntersectingCYLcirc}. Similarly, when $|\alpha| = 1-k$ we get the same number of solutions. Therefore, the total number of solutions vary as 0, 2, 4, 6 or 8.

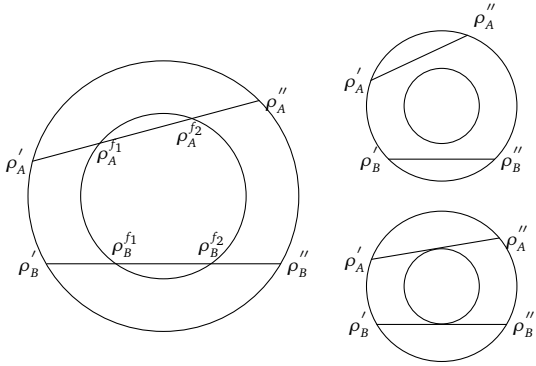
\begin{figure}
\begin{tikzpicture}[line cap=round, line join=round, >=Triangle]
  \clip(-4,-3.5) rectangle (3.5,3.5);
  \scriptsize
  
  \draw (-1.7,0) circle (1.8cm);
  \draw (-1.7,0) circle (1.1cm);
  \draw (-3.438,0.465) node[anchor=east] {$\rho^{'}_A$} -- (-0.427,1.272) node[anchor=west] {$\rho^{''}_A$};
  \draw (-3.25,-0.9) node[anchor=east] {$\rho^{'}_B$} -- (-0.14,-0.9) node[anchor=west] {$\rho^{''}_B$};
  \draw (-2.4,0.85) node[anchor=north] {$\rho^{f_1}_A$};
  \draw (-1.34,1.08) node[anchor=north] {$\rho^{f_2}_A$};
  \draw (-2.2,-0.9) node[anchor=south] {$\rho^{f_1}_B$};
  \draw (-1.05,-0.9) node[anchor=south] {$\rho^{f_2}_B$};

  \draw (2,1.2) circle (1cm);
  \draw (2,1.2) circle (0.5cm);
  \draw (1.06,1.54) node[anchor=east] {$\rho^{'}_A$} -- (2.342,2.14) node[anchor=south west] {$\rho^{''}_A$};
  \draw (1.292,0.492) node[anchor=east] {$\rho^{'}_B$} -- (2.70,0.492) node[anchor=west] {$\rho^{''}_B$};

  \draw (2,-1.2) circle (1cm);
  \draw (2,-1.2) circle (0.5cm);
  \draw (1.066,-0.841) node[anchor=east] { $\rho^{'}_A$} -- (2.766,-0.557) node[anchor=west] {$\rho^{''}_A$};
  \draw (1.138,-1.707) node[anchor=east] { $\rho^{'}_B$} -- (2.861,-1.707) node[anchor=west] {$\rho^{''}_B$};
  
\end{tikzpicture}
\caption{Top view to depict existence of solutions}
\label{fig:TopView}
\end{figure}

\begin{figure}
\begin{tikzpicture}[line cap=round, line join=round, >=Triangle]
  \clip(-2.70,-2.70) rectangle (2.66,2.58);
  \scriptsize
  
  \draw(0,0) circle (2cm);
  
  \draw [orange,thick][rotate around={115:(0,0)},dash pattern=on 2pt off 2pt] (0,1.78) ellipse (0.86cm and 0.1cm);
  \draw [orange,thick][rotate around={295:(0,0)},dash pattern=on 2pt off 2pt] (0,1.78) ellipse (0.86cm and 0.1cm);
  \filldraw [fill=black, draw=black] (-1.259,-1.554) circle (0.7pt);
  \filldraw [fill=black, draw=black] (1.259,1.554) circle (0.7pt);
  \draw [yellow,thick][rotate around={115:(0,0)},dash pattern=on 2pt off 2pt] (0,1) ellipse (1.72cm and 0.1cm);
  \draw [yellow,thick][rotate around={295:(0,0)},dash pattern=on 2pt off 2pt] (0,1) ellipse (1.72cm and 0.1cm);
  \filldraw [fill=black, draw=black] (0.38,1.7) circle (0.7pt);
  \filldraw [fill=black, draw=black] (-0.38,-1.7) circle (0.7pt);
  \filldraw [fill=black, draw=black] (0.365,1.375) circle (0.7pt);
  \filldraw [fill=black, draw=black] (-0.365,-1.375) circle (0.7pt);
  \draw [cyan,thick][rotate around={115:(0,0)},dash pattern=on 2pt off 2pt] (0,0.50) ellipse (1.92cm and 0.1cm);
  \draw [cyan,thick][rotate around={295:(0,0)},dash pattern=on 2pt off 2pt] (0,0.50) ellipse (1.92cm and 0.1cm);
  \filldraw [fill=black, draw=black] (-0.18,1.7) circle (0.7pt);
  \filldraw [fill=black, draw=black] (0.18,-1.7) circle (0.7pt);
  \filldraw [fill=black, draw=black] (1.272,-1.543) circle (0.7pt);
  \filldraw [fill=black, draw=black] (-1.272,1.543) circle (0.7pt);
  \filldraw [fill=black, draw=black] (-0.165,1.375) circle (0.7pt);
  \filldraw [fill=black, draw=black] (0.165,-1.375) circle (0.7pt);
  \draw [blue,thick][rotate around={115:(0,0)},dash pattern=on 2pt off 2pt] (0,0.15) ellipse (1.985cm and 0.1cm);
  \draw [blue,thick][rotate around={295:(0,0)},dash pattern=on 2pt off 2pt] (0,0.15) ellipse (1.985cm and 0.1cm);
  \filldraw [fill=black, draw=black] (-0.57,1.69) circle (0.7pt);
  \filldraw [fill=black, draw=black] (0.57,-1.69) circle (0.7pt);
  \filldraw [fill=black, draw=black] (-0.56,1.39) circle (0.7pt);
  \filldraw [fill=black, draw=black] (0.56,-1.39) circle (0.7pt);
  
  \filldraw [fill=black, draw=black] (-0.9,1.655) circle (0.7pt);
  \filldraw [fill=black, draw=black] (0.9,-1.655) circle (0.7pt);
  \filldraw [fill=black, draw=black] (-0.885,1.415) circle (0.7pt);
  \filldraw [fill=black, draw=black] (0.885,-1.415) circle (0.7pt);
  
  \draw [rotate around={0:(0,0)},dash pattern=on 2pt off 2pt] (0,1.54) ellipse (1.25cm and 0.17cm);
  \draw [rotate around={0:(0,0)},dash pattern=on 2pt off 2pt] (0,-1.54) ellipse (1.25cm and 0.17cm);
  
  \draw (1.813,0.845)-- (-1.813,-0.845);
  \draw (-0.3,2.6) node[anchor=north west] {$\mathbf {|0\rangle}$};
  \draw (-0.3,-2) node[anchor=north west] {$\mathbf {|1\rangle}$};
  \draw (1.813,0.845) node[anchor=south west] {$|\psi_1\rangle$};
  \draw (-1.813,-0.845) node[anchor=north east] {$|\psi_2\rangle$};
  
  \draw [fill] (0,0) circle (0.5pt);
  
\end{tikzpicture}
\caption{Different cases of possible solutions: The figure shows depending on the choice of k different $CIRC_k$s (the colored ones) intersect the rims of the $CYL_{|\psi\rangle}$ (the two black ones) , the number of points of intersections will vary from 0 to 4. Total number of possible solutions will vary as 0, 2, 4, 6 or 8 }
\label{fig:IntersectingCYLcirc}
\end{figure}
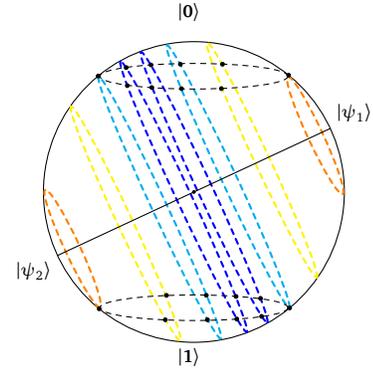

\section{\label{sec:Maximization}Maximization of coherence cloners}
In the earlier section, we have seen classification of states given a particular coherence cloner. It is clear that the range of states whose coherence can be cloned perfectly depends on the cloner $U_{cc}$ . In this section, we discuss the techniques to maximize this cloner so as to have more range of states whose coherence can be cloned perfectly.

There can be infinite number of unitaries that can be defined based on the transformation rules defined in Eq. \ref{eq:pure2mixed1}. Every unitary depends on six states $|\psi_1\rangle_{A}$, $|\psi_2\rangle_{A}$, $\rho_A^{'}$, $\rho_A^{''}$, $\rho_B^{'}$, $\rho_B^{''}$. 

The cloner $U_{cc}$ on system ABC transforms subsystem A which was an arbitrary quantum state   $|\psi\rangle=\alpha|\psi_1\rangle + \beta|\psi_2\rangle$ and subsystem B which was a blank state $|0\rangle$ to $\rho_A^{final} = |\alpha|^2\rho_A^{'} + |\beta|^2\rho_A^{''}$ and  $\rho_B^{final} = |\alpha|^2\rho_B^{'} + |\beta|^2\rho_B^{''}$ respectively. $\rho_{A(B)}^{final}$ is a convex combination of $\rho_{A(B)}^{'}$ and $\rho_{A(B)}^{''}$. Therefore, final state $\rho_{A(B)}^{final}$ should lie somewhere on the line segment joining the states $\rho_{A(B)}^{'}$ and $\rho_{A(B)}^{''}$ in the Bloch sphere, as we can see in the Fig. \ref{fig:TwoCylinders}. Therefore, we can say that possibility of perfect coherence cloning for a state depends on whether the line segments intersect the cylinder with $C_l({\rho_{A(B)}^{final}})$ or not. Given the fact that they do intersect, they have to intersect with the same ratio as each other, only then perfect cloning will be possible on both the subsystems otherwise we can definitely say that the $C_l(|\psi\rangle)$ cannot be perfectly copied.

This brings us to our first level of maximization of our cloner $U_{cc}$. We can see in the Fig. \ref{fig:TopView} that the cloners which have their line segments joining $\rho_A^{'}$ and $\rho_A^{''}$ and line segment joining $\rho_B^{'}$ and $\rho_B^{''}$ pass through the central axis allow for the possibility of perfect coherence cloning for a bigger range of states as they will intersect all the cylinders above them.

The second level of maximization can be done in the following way. We can see that if the starting states $|\psi_{1}\rangle_A$ and $|\psi_{2}\rangle_A$ of the assumed cloner lies on the equatorial plane, i.e. $C_l(|\psi_{1}\rangle$) = $C_l(|\psi_{2}\rangle)$ = $1$  and the output states will have equal coherence to that of the starting states, then this cloner will give maximum number of perfectly cloned copies as it will intersect all the cylinders on the sphere. Then the cloning transformation is given by
\begin{equation}
\label{eq:MaxCloner1}
\begin{split}
|\psi_{1}\rangle_A |0\rangle_B|X_{0}\rangle_C \longrightarrow |\psi_{1}{'}\rangle_A |\psi_{1}{''}\rangle_B|X_1\rangle_C,\\
|\psi_{2}\rangle_A |0\rangle_B|X_{0}\rangle_C \longrightarrow |\psi_{2}{'}\rangle_A |\psi_{2}{''}\rangle_B|X_2\rangle_C.
\end{split}
\end{equation}
Here $\langle\psi_{1}^{'}|\psi_2^{'}\rangle=0$ and $\langle\psi_{1}^{''}|\psi_2^{''}\rangle=0$ as this would ensure that the line segment joining $\rho_A^{'}$ and $\rho_A^{''}$ and the line segment joining $\rho_B^{'}$ and $\rho_B^{''}$ both pass through the central axis. 

Interestingly, it is observed that class of states whose coherence can be cloned perfectly given the transformations defined in Eq. \ref{eq:MaxCloner1} and the conditions $\langle\psi_{1}^{'}|\psi_2^{'}\rangle=0$ and $\langle\psi_{1}^{''}|\psi_2^{''}\rangle=0$ are the states that lie on the great circle passing through the states $|\psi_1\rangle$, $|\psi_2\rangle$, $|0\rangle$ and $|1\rangle$ on the Bloch sphere. 

The calculations are as follows: The states $|\psi_1\rangle=\frac{1}{\sqrt{2}}|0\rangle + \frac{1}{\sqrt{2}}\mathrm{e}^{\imath\phi_1}|1\rangle$ and $|\psi_2\rangle=\frac{1}{\sqrt{2}}|0\rangle + \frac{1}{\sqrt{2}}\mathrm{e}^{\imath(\phi_1 + \pi)}|1\rangle$ represent a pair of orthogonal states on the equatorial circle of the Bloch sphere. Then, any arbitrary state $|\psi\rangle = \alpha|\psi_1\rangle + \beta|\psi_2\rangle$ can be written as $\cos{\frac{\theta}{2}}|0\rangle + \sin{\frac{\theta}{2}}\mathrm{e}^{\imath\phi_2}|1\rangle$ in $\{|0\rangle, |1\rangle\}$ basis. Then, $C_l(|\psi\rangle)=|\sin{\theta}|$ in $\{|0\rangle, |1\rangle\}$ basis.
As $\alpha=\frac{1}{\sqrt{2}}(\cos{\frac{\theta}{2}}+\sin{\frac{\theta}{2}}\mathrm{e}^{\imath(\phi_2 - \phi_1)})$ and $\beta=\frac{1}{\sqrt{2}}(\cos{\frac{\theta}{2}}-\sin{\frac{\theta}{2}}\mathrm{e}^{\imath(\phi_2 - \phi_1)})$, the final coherence which is given by $C_l(\psi^{final})=||\alpha|^2-|\beta|^2|=|2|\alpha|^2-1|$ becomes $|\sin{\theta}\cos{(\phi_2-\phi_1)}|$. 

We see that the only solution where initial coherence is equal to final coherence is when $\phi_1=\phi_2$. Therefore, for all values of $\theta$ the final coherence $C_l(\psi^{final})=C_l(\psi)$ if $\phi_1=\phi_2$. Which means that the cloner defined in equations \ref{eq:MaxCloner1} perfectly clones coherence for all the states on the great circle passing through $|\psi_1\rangle$, $|\psi_2\rangle$, $|0\rangle$, $|1\rangle$ as shown in the Fig. \ref{fig:OptimalSolution}.


\begin{figure}
\begin{tikzpicture}[line cap=round, line join=round, >=Triangle]
 \clip(-3,-3) rectangle (3.2,3);
  
 \draw(0,0) circle (2cm);
 \scriptsize
  \draw [rotate around={0.:(0.,0.)},dash pattern=on 1pt off 1pt] (0,0) ellipse (2cm and 0.15cm);
  \tikzset{
    partial ellipse/.style args={#1:#2:#3}{
        insert path={+ (#1:#3) arc (#1:#2:#3)}
    }
  }
  \draw[blue,thick] [rotate around={0:(0,0)}] (0,0) ellipse (1cm and 2cm);
  
  
  \draw [dotted] (0,-2) -- (0,2);
  \draw (0,2) node[anchor=south] {$\mathbf {|0\rangle}$};
  \draw (0,-2) node[anchor=north] {$\mathbf {|1\rangle}$};
  \draw [fill = blue] (0.99,0.14) node[anchor=south west] {$|\psi_1\rangle$} circle (1.25pt);
  \draw [fill = blue] (-0.99,-0.14) node[anchor=north east] {$|\psi_2\rangle$} circle (1.25pt);
  \draw [fill] (0,0) circle (0.5pt);
\end{tikzpicture}
\caption{States whose coherence can be perfectly cloned}
\label{fig:OptimalSolution}
\end{figure}

\end{appendix}
\end{document}